\newcommand{\mat}[1]{\boldsymbol{#1}}
\newcommand{\xmark}{\ding{55}}
\begin{document}

\newcommand\tool{{\sf DSSynth}\xspace}

\mainmatter  % start of an individual contribution

\title{Automated Formal Synthesis of Digital Controllers for State-Space
Physical Plants\thanks{Supported by EPSRC grant EP/J012564/1, ERC project
280053 (CPROVER) and the H2020 FET OPEN 712689 SC$^2$.}}

\titlerunning{Automated Formal Synthesis of Digital Controllers}

% a short form should be given in case it is too long for the running head
%\titlerunning{Lecture Notes in Computer Science: Authors' Instructions}

% the name(s) of the author(s) follow(s) next
%
% NB: Chinese authors should write their first names(s) in front of
% their surnames. This ensures that the names appear correctly in
% the running heads and the author index.
%
%\author{Alfred Hofmann%
%\thanks{Please note that the LNCS Editorial assumes that all authors have used
%the western naming convention, with given names preceding surnames. This determines
%the structure of the names in the running heads and the author index.}%
%\and Ursula Barth\and Ingrid Haas\and Frank Holzwarth\and\\
%Anna Kramer\and Leonie Kunz\and Christine Rei\ss\and\\
%Nicole Sator\and Erika Siebert-Cole\and Peter Stra\ss er}
%
%\authorrunning{Lecture Notes in Computer Science: Authors' Instructions}
% (feature abused for this document to repeat the title also on left hand pages)

% the affiliations are given next; don't give your e-mail address
% unless you accept that it will be published
%\institute{Springer-Verlag, Computer Science Editorial,\\
%Tiergartenstr. 17, 69121 Heidelberg, Germany\\
%\mailsa\\
%\mailsb\\
%\mailsc\\
%\url{http://www.springer.com/lncs}}

%
% NB: a more complex sample for affiliations and the mapping to the
% corresponding authors can be found in the file "llncs.dem"
% (search for the string "\mainmatter" where a contribution starts).
% "llncs.dem" accompanies the document class "llncs.cls".
%

%\toctitle{Lecture Notes in Computer Science}
%\tocauthor{Authors' Instructions}

\author{%
Alessandro Abate\inst{1} \and
Iury Bessa\inst{2} \and
Dario Cattaruzza\inst{1} \and
Lucas Cordeiro\inst{1,2} \and
\mbox{Cristina David}\inst{1} \and
Pascal Kesseli\inst{1} \and
Daniel Kroening\inst{1} \and
Elizabeth Polgreen\inst{1}}

\authorrunning{Abate et al.}

\institute{University of Oxford, UK
\and
Federal University of Amazonas, Manaus, Brazil}

\maketitle

\begin{abstract}
We present a sound and automated approach to synthesize safe digital
feedback controllers for physical plants represented as linear,
time-invariant models.  Models are given as dynamical equations with inputs,
evolving over a continuous state space and accounting for errors due to the
digitization of signals by the controller.  
Our counterexample guided inductive synthesis (CEGIS) approach has two phases: We synthesize a static feedback controller that stabilizes the system but that may not be safe for all initial conditions. Safety is then verified either via BMC or abstract acceleration; if the verification step fails, a counterexample is provided to the synthesis engine and the process iterates until a safe controller is obtained.
%Our approach has two stages,
%leveraging counterexample guided inductive synthesis (CEGIS) and
%reachability analysis.  CEGIS synthesizes a static feedback controller that
%stabilizes the system under restrictions given by the safety of the reach
%space.  Safety is verified either via BMC or abstract acceleration; if the
%verification step fails, we refine the controller by generalizing the
%counterexample.  
We demonstrate the practical value of this approach by automatically synthesizing safe controllers for intricate
physical plant models from the digital control literature. 
%
%\keywords{
%State-space dynamical models of physical systems; 
%digital controllers; 
%analogue-to-digital converters; 
%time sampling; 
%quantization; 
%fixed-point arithmetic; 
%CEGIS; 
%safety requirements. 
%}
\end{abstract}

%-------------------------------
\section{Introduction}
%-------------------------------

Linear Time Invariant (LTI) models represent a broad class of dynamical
systems with significant impact in numerous application areas such as life
sciences, robotics, and engineering~\cite{astrom1997computer,Franklin15}. 
The synthesis of controllers for LTI models is well understood, however the
use of digital control architectures adds new challenges due to the effects
of finite-precision arithmetic, time discretization, and quantization noise,
which is typically introduced by Analogue-to-Digital (ADC) and
Digital-to-Analogue (DAC) conversion.  While research on digital control is
well developed~\cite{astrom1997computer}, automated and sound control
synthesis is challenging, particularly when the synthesis objective goes
beyond classical stability.  There are recent methods for verifying
reachability properties of a given controller~\cite{FLD+11}. 
However, these methods have not been generalized to control synthesis.  
Note that a synthesis algorithm that guarantees stability does not
ensure safety: the system might transitively visit an unsafe state resulting in unrecoverable failure.

We propose a novel algorithm for the synthesis of control algorithms for LTI
models that are guaranteed to be 
%both \blue{stable} and 
safe, considering both the continuous dynamics of the plant and the finite-precision discrete dynamics
of the controller, as well as the hybrid elements that connect them. 
%
%% In this paper, we synthesize controllers for hybrid
%% closed-loop systems, i.e., software-implemented embedded controllers along
%% with a model of their physical environment (the plant), that are
%% \emph{safe}.
%
We account for the presence of errors originating from a number of sources: 
quantisation errors in ADC and DAC, 
representation errors (from the discretization introduced by finite-precision arithmetic), 
and roundoff and saturation errors in the verification process 
(from finite-precision operations).  
Due to the complexity of such systems, we focus on linear models with known
implementation features (e.g., number of bits, fixed-point arithmetic). 
We~expect a safety requirement given as a reachability property.  Safety
requirements are frequently overlooked in conventional feedback control
synthesis, but play an important role in systems engineering.

We give two alternative approaches for synthesizing digital controllers for
state-space physical plants, both based on CounterExample Guided Inductive
Synthesis (CEGIS)~\cite{DBLP:conf/asplos/Solar-LezamaTBSS06}.  We prove
their soundness by quantifying errors caused by digitization and
quantization effects that arise when %time/value-discrete 
the digital controller
interacts with the continuous plant.

{\em The first approach} uses a na\"ive technique that starts by devising a
digital controller that stabilizes the system while remaining safe for a
pre-selected time horizon and a single initial state; then, it verifies
unbounded safety by unfolding the dynamics of the system, considering the
full hyper-cube of initial states, and checking a {\em completeness
threshold}~\cite{DBLP:conf/vmcai/KroeningS03}, i.e., the number of
iterations required to sufficiently unwind the closed-loop state-space
system such that the boundaries are not violated for any larger number of
iterations.  As~it requires unfolding up to the completeness threshold, this
approach is computationally expensive.

Instead of unfolding the dynamics, {\em the second approach}
employs {\em abstract acceleration}~\cite{cattaruzza2015unbounded} to
evaluate all possible progressions of the system simultaneously. 
Additionally, the second approach uses {\em abstraction refinement},
enabling us to always start with a very simple description regardless of the
dynamics complexity, and only expand to more complex models
when a solution cannot be found.

We provide experimental results showing that both approaches are able to
efficiently synthesize 
%\blue{stable} and 
safe controllers for a set of intricate
physical plant models taken from the digital control literature: the median
run-time for our benchmark set is $7.9$ seconds, and most controllers can be
synthesized in less than $17.2$ seconds.  We further show that, in a direct
comparison, the abstraction-based approach (i.e., the second approach) lowers the median
run-time of our benchmarks by a factor of seven over the first approach
based on the unfolding of the dynamics.

%\todo{Clarify challenges: FWL (rounding, truncation), quantisation (ADC, DAC).}
%
%\todo{Discuss numerical (HSCC'17) vs. algorithmical soundness (CAV'17).}

\subsection*{Contributions} 

\begin{enumerate}
\item  We compute state-feedback controllers that 
%are provably \blue{stable} and in addition 
  guarantee a given safety property.  Existing methods for
  controller synthesis rely on transfer function representations, which are
  inadequate to prove safety requirements.
\item We provide two novel algorithms: the first, na\"ive one, relies on
  an unfolding of the dynamics up to a completeness threshold, while the
  second one is abstraction-based and leverages abstraction refinement and
  acceleration to improve scalability while retaining soundness.  Both
  approaches provide sound synthesis of state-feedback systems and consider
  the various sources of imprecision in the implementation of the control
  algorithm and in the modeling of the plant.
\item We develop a model for different sources of quantization errors and
  their effect on reachability properties.  We give bounds that ensure the
  safety of our controllers in a hybrid continuous-digital domain.
\end{enumerate}

\section{Related Work}
\label{sec:relw}
%-------------------------------

\paragraph{CEGIS -}

Program synthesis is the problem of computing correct-by-design programs
from high-level specifications. Algorithms for this problem have made
substantial progress in recent years, for instance~\cite{itzhaky2010simple} to inductively synthesize invariants for the 
generation of desired programs.

Program synthesizers are an ideal fit for the synthesis of digital controllers, since
the semantics of programs capture the effects of finite-precision arithmetic
precisely.  In~\cite{DBLP:conf/cdc/RavanbakhshS15}, the authors use CEGIS
for the synthesis of switching controllers for stabilizing continuous-time
plants with polynomial dynamics.  The work extends to affine systems, but is
limited by the capacity of the state-of-the-art SMT solvers for solving
linear arithmetic.  Since this approach uses switching models instead of
linear dynamics for the digital controller, it avoids problems related to
finite precision arithmetic, but potentially suffers from state-space
explosion.  Moreover, in \cite{DBLP:conf/emsoft/RavanbakhshS16} the same
authors use a CEGIS-based approach for synthesizing continuous-time
switching controllers that guarantee \emph{reach-while-stay} properties of
closed-loop systems, i.e., properties that specify a set of goal states and
safe states (constrained reachability).  This solution is based on
synthesizing control Lyapunov functions for switched systems that yield
switching controllers with a guaranteed minimum dwell time in each mode. 
However, both approaches are unsuitable for the kind of control we seek to
synthesize.

The work in~\cite{hscc-paper} synthesizes stabilizing
controllers for continuous plants given as transfer functions by exploiting
bit-accurate verification of software-implemented digital
controllers~\cite{%IsmailBCFF15, 
Bessa16}.  While this work also uses CEGIS,
the approach is restricted to digital controllers for stable closed-loop
systems given as transfer function models: 
this results in  a static check on their coefficients.  
By contrast, in the current paper we consider a state-space representation of the physical system, 
which requires ensuring the specification over actual traces of the model, 
%<<<<<<< HEAD
alongside the numerical soundness required by the effects of discretisation and finite-precision errors.   
A state-space model has known advantages over the transfer function
representation~\cite{Franklin15}: it naturally generalizes to multivariate systems
(i.e., with multiple inputs and outputs); 
%(ii)~gives the state-to-input relationships, offering an internal model of the system;  
and it allows synthesis of control systems with guarantees on the internal dynamics, e.g.,
%=======
%% alongside the numerical soundness required by the effects of discretization and imprecision errors.   
%% A state-space model has known advantages over the transfer function
%% representation~\cite{Astrom08}: (i)~it naturally generalizes to multivariate systems
%% (i.e., with multiple inputs and outputs); (ii)~gives the state-to-input
%% relationships, offering an internal model of the system; and (iii)~allows the
%% synthesis of control systems with guarantees on the internal dynamics, e.g.,
%% >>>>>>> 14b2cf34a87b12f077cef9a8fca94e2213ab9364
to synthesize controllers that make the closed-loop system \emph{safe}.  Our
work focuses on the \emph{safety} of internal states, which is usually
overlooked in the literature.  Moreover, our work integrates an
abstraction/refinement (CEGAR) step inside the main CEGIS loop.

The tool Pessoa~\cite{mazo2010pessoa} synthesizes correct-by-design embedded
control software in a Matlab toolbox.  It is based on the abstraction of a
physical system to an equivalent finite-state machine and on the computation
of reachability properties thereon. 
%over the finite-state machine.  
Based on this safety specification, \mbox{Pessoa} can synthesize embedded controller
software for a range of properties.  The embedded controller software can be
more complicated than the state-feedback control we synthesize, and the
properties available cover more detail. 
%
%based on a safety specification over state values, as well as specification on characteristic functions of the controller. 
%
However, relying on state-space discretization \mbox{Pessoa} is likely to incur in scalability limitations. 
Along this research line, \cite{Anta2010,liu16} studies the synthesis of digital controllers for continuous dynamics, 
and \cite{zamani2014} extends the approach to the recent setup of Network Control Systems. 

%\todo{Have we covered all the literature by Feron et al?}  

\paragraph{Discretization Effects -}

The classical approach to control synthesis has often disregarded digitalization effects, 
%thus sacrificing soundness. 
whereas more recently modern techniques have focused on
different aspects of discretization, including delayed
response~\cite{Duggirala2015} and finite word length (FWL) semantics, 
with the goal either to verify (e.g.,~\cite{daes20161}) or to optimize
(e.g.,~\cite{oudjida2014design}) given implementations. 

There are two different problems that arise from FWL semantics.  The first
is the error in the dynamics caused by the inability to represent the exact
state of the physical system, while the second relates to rounding and saturation errors
during computation.  In~\cite{fialho1994stability}, a stability measure
based on the error of the digital dynamics ensures that the deviation
introduced by FWL does not make the digital system unstable.  A~more recent
approach~\cite{DBLP:journals/automatica/WuLCC09} uses $\mu$-calculus to
directly model the digital controller so that the selected parameters are
stable by design.  The analyses in~\cite{DBLP:conf/hybrid/RouxJG15,
DBLP:conf/hybrid/WangGRJF16} rely on an invariant computation on the
discrete system dynamics using Semi-Definite Programming (SDP).  While the
former uses bounded-input and bounded-output (BIBO) properties to determine
stability, the latter uses Lyapunov-based quadratic invariants.  In both
cases, the SDP solver uses floating-point arithmetic and soundness is
checked by bounding the error.  An alternative is~\cite{park2016scalable},
where the verification of given control code is performed against a known
model by extracting an LTI model of the code by symbolic execution:  
to account for rounding errors, an upper bound is introduced in the
verification phase.  The work in \cite{picasso2003stabilization}
%picasso2002construction} 
introduces invariant sets as a mechanism
to bound the quantization error effect on stabilization as an invariant set
that always converges toward the controllable set.  Similarly,
\cite{liberzon2003hybrid} evaluates the quantization error dynamics
and bounds its trajectory to a known region over a finite time period. 
This technique works for both linear and non-linear systems.

%\addtodo{[Move material from \ref{sec:rw} here]}

%-------------------------------
\section{Preliminaries}
\label{sec:preliminaries}
%-------------------------------

%-------------------------------
\subsection{State-space representation of physical systems} 
\label{ssec:ssrepresentation}
%-------------------------------

We consider models of physical plants expressed as ordinary differential
equations (ODEs), which we assume to be controllable and under full state
information (i.e., we have access to all the model variables):
\begin{align}
\label{eq:ode}
\dot{x}(t) = Ax(t)+ B u(t), \quad x \in \mathbb{R}^{n}, u \in \mathbb{R}^m, A \in \mathbb{R}^{n \times n}, B \in \mathbb{R}^{n \times m}, 
\end{align}
where $t \in \mathbb R_0^+$, where $A$ and $B$ are matrices that fully
specify the continuous plant, and with initial states set as $x(0)$.  While
ideally we intend to work on the continuous-time plant, in this work
Eq.~\eqref{eq:ode} is soundly discretized in time~\cite{fadali} into
%
%\textcolor{red}{What is subscript ``d''?}
%
\begin{align}
\label{eq:plant}
x_{k+1} = A_d x_k+ B_d u_k
\end{align} 
where $k \in \mathbb N$ and $x_{0}=x(0)$ is the initial state. 
$A_d$ and $B_d$ denote the matrices that describe the discretized plant dynamics, whereas
$A$ and $B$ denote the continuous plant dynamics.  
We synthesize for requirements over this discrete-time domain. 
Later, we will address the issue of variable quantization, 
as introduced by the ADC/DAC conversion blocks (Fig.~\ref{fig:digitalsystem}).

\begin{figure*}[htb]
\centering

\tikzset{add/.style n args={4}{
    minimum width=6mm,
    path picture={
        \draw[circle] 
            (path picture bounding box.south east) -- (path picture bounding box.north west)
            (path picture bounding box.south west) -- (path picture bounding box.north east);
        \node[draw=none] at ($(path picture bounding box.south)+(0,0.13)$)     {\small #1};
        \node[draw=none] at ($(path picture bounding box.west)+(0.13,0)$)      {\small #2};
        \node[draw=none] at ($(path picture bounding box.north)+(0,-0.13)$)    {\small #3};
        \node[draw=none] at ($(path picture bounding box.east)+(-0.13,0)$)     {\small #4};
        }
    }
 }

\resizebox{1.0\textwidth}{!}{
 \begin{tikzpicture}[scale=0.6,-,>=stealth',shorten >=.2pt,auto,
     semithick, initial text=, ampersand replacement=\&,]

  \matrix[nodes={draw, fill=none, shape=rectangle, minimum height=.2cm, minimum width=.2cm, align=center}, row sep=.6cm, column sep=.6cm] {
    \node[draw=none] (r) {$r_k$};
    
   \& \node[circle,add={-}{+}{}{}] (circle) {};
   \node[draw=none] (ez) at ([xshift=1cm,yshift=.15cm]circle)  {$u_k$};
   \node[rectangle,draw,minimum width=1cm,minimum height=1cm] (Kd) at ([xshift=0,yshift=-1.5cm]circle)  {\sc $\mat{K}_d$};
   \coordinate (kdsouth) at ([yshift=-2cm]Kd);

   \& complexnode/.pic={ 
      \node[rectangle,dashed,draw,minimum width=3cm,minimum height=1.6cm,label=\textbf{DAC}] (dac) {};
     \node[circle,add={+}{+}{}{},fill=gray!20] (q2) at ([xshift=-.65cm]dac.center) {};
     \node[draw=none] (q2t)  at ([yshift=.55cm]q2) {{\sc Q2}};
     \node[draw=none] (v2)  at ([yshift=-1.5cm]q2) {$\nu_2$};
     \node[fill=gray!20] (zoh) at ([xshift=.65cm]dac.center) {\sc ZOH};
   }   

   \& complexnode/.pic={ 
      \node[rectangle,dashed,draw,minimum width=8cm,minimum height=3.5cm,label=\textbf{Plant}] (plant)  at ([yshift=-.5cm]dac.center) {};
      \node[rectangle,draw,minimum width=1cm,minimum height=1cm] (B) at ([xshift=-2.5cm,yshift=.5cm]plant.center)  {\sc $\mat{B}$};
      \node[draw=none] (u) at ([xshift=-1cm,yshift=.15cm]B)  {$u(t)$};
      \node[circle,add={+}{+}{}{}] (p1) at ([xshift=-1.3cm,yshift=.5cm]plant.center) {};
      \node[draw=none] (xdot) at ([xshift=.85cm,yshift=.15cm]p1)  {$\dot{\vec{x}}(t)$};   
      \node[rectangle,draw,minimum width=1cm] (int) at ([xshift=.5cm,yshift=.5cm]plant.center) {\sc $\int$};
      \coordinate (xsouth) at ([xshift=1cm]int);
      \node[draw=none] (x) at ([xshift=1cm,yshift=.15cm]int)  {$\vec{x}(t)$};
      \node[rectangle,draw,minimum width=1cm,minimum height=1cm] (A) at ([xshift=.5cm,yshift=-1cm]plant.center)  {\sc $\mat{A}$};
      \coordinate (aeast) at ([xshift=1cm]A);
      \coordinate (awest) at ([xshift=-1.8cm]A);
    }   
        
   \& complexnode/.pic={ 
     \node[rectangle,dashed,draw,minimum width=3.5cm,minimum height=1.6cm,label=\textbf{ADC},] (adc) {};
     \draw[] ([xshift=-1cm]adc.center) -- ++(0.5,0.2cm);
     \coordinate (switch1) at ([xshift=-1cm]adc.center);
     \coordinate (switch2) at ([xshift=-0.4cm]adc.center);
     \node[circle,add={+}{+}{}{},fill=gray!20] (q1) at ([xshift=.6cm]adc.center) {};
     \node[draw=none] (q2t)  at ([yshift=.55cm]q1) {\sc Q1};
     \node[draw=none] (v1)  at ([yshift=-1.5cm]q1) {$\nu_1$};
     \node[draw=none] (y) at ([xshift=.85cm,yshift=.15cm]q1)  {$\vec{x}_k$};       
     \coordinate (ykeast) at ([xshift=1.9cm]q1);
     \coordinate (yksouth) at ([xshift=1.9cm,yshift=-3.5cm]q1);
   } 
   \& \coordinate (aux1);
   \& \\
  };

  \path[->] (v1) edge (q1.south);
  \path[->] (v2) edge (q2.south);
  \path[->] (r) edge (circle.west);
  \path[->] (circle.east) edge (q2.west);
  \path       (q2.east) edge (zoh.west);
  \path[->] (zoh.east) edge (B.west);
  \path
   (B.east) edge (p1.west)
   (p1.east) edge (int.west)
   (xsouth) edge (aeast)
   (aeast) edge (A.east)
   (A.west) edge (awest)
   (awest) edge (p1.south)
   (int.east) edge (switch1.west)
   (switch2) edge (q1.west);
  \path
   (q1.east) edge (ykeast)
   (ykeast) edge (yksouth)
   (yksouth) edge (kdsouth);
  \path[->]  (kdsouth) edge (Kd.south);
  \path (Kd.north) edge (circle.south);
 \end{tikzpicture}
}
 \caption{Closed-loop digital control system 
 \label{fig:digitalsystem}}
\end{figure*}

%-------------------------------
\subsection{Controller synthesis via state feedback}
\label{ssec:statefeedbackcontrol}
%-------------------------------

Models \eqref{eq:ode} and \eqref{eq:plant} depend on external non-determinism in the form of input signals $u (t)$ and  $u_k$, respectively. 
Feedback architectures can be employed to manipulate the properties and behaviors of the continuous process (the plant).   
%reduce the effect of parametric uncertainties and attenuate the effects of noise and disturbance. 
We are interested in the synthesis of digital feedback control algorithms, 
as implemented on Field-Programmable Gate Arrays or Digital Signal Processors. 
The most basic feedback architecture is the state feedback one, 
where the control action $u_k$ (notice we work with the discretized signal) is computed by: 
\begin{equation}
\label{eq:controlaction}
u_k = r_{k} - K x_k. 
\end{equation}
Here, $K \in \mathbb{R}^{m \times n}$ is a state-feedback gain matrix, 
and $r_{k}$ is a reference signal (again digital).   
The closed-loop model then takes the form 
\begin{align}
\label{eq:closedloopss}
x_{k+1} = ( A_d - B_d K ) x_k + B_d r_k.
\end{align}
The gain matrix $K$ can be set so that the closed-loop discrete dynamics are
shaped as desired, for instance according to a specific stability goal or
around a specific dynamical behavior \cite{astrom1997computer}.  As argued
later in this work, we will target more complex objectives, such as
quantitative safety requirements, which are not typical in the digital
control literature.  Further, we will embrace the digital nature of the
controller, which manipulates quantized signals as discrete quantities represented with finite precision. 

%-------------------------------
\subsection{Stability of closed-loop systems}
\label{ssec:stability}
%-------------------------------
%\todo{this section sort of implies we are synthesising stable controllers? I argue we use this objective to guess reasonable controllers in the synthesis loop. }

In this work we employ asymptotic stability in the CEGIS loop,  
as an objective for guessing controllers that are later proven sound over safety requirements.  
Asymptotic stability is a property that amounts to convergence of the model executions to an equilibrium point, 
starting from any states in a neighborhood of the point (see Figure~\ref{fig:ct} for the portrait of a stable execution, converging to the origin).  
%This stability concept is stronger than the
%Lyapunov and BIBO stability researched by previous
%work~\cite{DBLP:conf/emsoft/RavanbakhshS16, hscc-paper,
%DBLP:conf/hybrid/RouxJG15, DBLP:conf/hybrid/WangGRJF16}.  
In the case of linear systems as in~\eqref{eq:closedloopss}, 
considered with a zero reference signal, 
the equilibrium point of interest is the origin. 

A discrete-time LTI system as \eqref{eq:closedloopss} is asymptotically
stable if all the roots of its characteristic polynomial (i.e., the
eigenvalues of the closed-loop matrix $A_d - B_d K$) are inside the unity
circle of the complex plane, i.e., their absolute values are strictly less than
one~\cite{astrom1997computer} (this simple sufficient condition can be generalised, 
however this is not necessary in our work).  
In this paper, we express this stability specification $\phi_\mathit{stability}$ in terms of a check known as
\emph{Jury's criterion}~\cite{fadali}: this is an easy algebraic formula to
select the entries of matrix $K$ so that the closed-loop dynamics are shaped
as desired.

%-------------------------------
\subsection{Safety specifications for dynamical systems}
\label{ssec:safety}
%-------------------------------

We are not limited to the synthesis of digital stabilizing controllers -- a
well known task in the literature on digital control systems -- but target
safety requirements with an overall approach that is sound and automated. 
More specifically, we require that the closed-loop system
\eqref{eq:closedloopss} meets given safety specifications.  A safety
specification gives raise to a requirement on the states of the model, such
that the feedback controller (namely the choice of the gains matrix~$K$)
must ensure that the state never violates the requirement.  Note that a
stable, closed-loop system is not necessarily a safe system: indeed, the
state values may leave the safe part of the state space while they converge
to the equilibrium, which is typical in the case of oscillatory dynamics. 
%However, a safe system will always be stable.  
In~this work, the safety property is expressed as:
\begin{equation}
\label{eq:safetyliteral}
\phi_\mathit{safety}\iff \forall k\ge 0.\, \bigwedge_{i=1}^{n}{\underline{x_{i}} \leq x_{i,k} \leq \overline{x_{i}}},
\end{equation}
%
%\addtodo{We consider $x_{i}^{-}$ and $x_{i}^{+}$ to be -1 and 1, respectively.}
%
where $\underline{x_{i}}$ and $\overline{x_{i}}$ are lower and upper bounds
for the $i$-th coordinate $x_{i}$ of state $x\in \mathbb R^n$ at the $k$-th
instant, respectively.  This means that the states will always be within an $n$-dimensional hyper-box.

Furthermore, it is practically relevant to consider the 
constraints $\phi_\mathit{input}$ on the input
signal $u_{k}$ and $\phi_\mathit{init}$ on the initial states $x_0$,
which we assume have given bounds:
$\phi_\mathit{input} = {\forall k.\underline{u} \leq u_{k} \leq \overline{u}} $, 
$\phi_\mathit{init} = \bigwedge_{i=1}^{n} \underline{x_{i,0}} \leq x_{i,0} \leq \overline{x_{i,0}}.$
For the former, this means that the control input might saturates in view of
physical constraints.

% \addtodo{[The remainder of the section needs to be updated. ]}
% The output and measurable of the plant is being sampled by an ADC, and since the full state of the
% plant cannot be witnessed directly, the controller software includes an observer that 
% estimates the state of the plant (see appendix for details on how such an observer
% is constructed). In this work, we are assuming that all the states measurements are available for the state feedback purpose, and the state observer design may be disregarded.  
% \addtodo{[Unclear what the remainder is for: ]}
%>>>>>>> added initial constraints on x_0

%+++++++++++++++++++++++++++++++++++++++++++++++++++++++++++++++++++++++++++++++
\subsection{Numerical representation and soundness} 
\label{sec:numeric_rep}
%+++++++++++++++++++++++++++++++++++++++++++++++++++++++++++++++++++++++++++++++

The models we consider have two sources of error that are due to numerical  
representation.  The first is the numerical error introduced by the
fixed-point numbers employed to model the plant, i.e., to represent the
plant dynamics $A_d$, $B_d$ and $x_k$.  The second is the quantization error
introduced by the digital controller, which performs operations on
fixed-point numbers.  In this section we outline the notation for the
fixed-point representation of numbers, and briefly describe the errors
introduced.  A~formal discussion is in
Appendix~\ref{appendix:numerical_errors}.

Let $\mathcal{F}_{\langle I,F \rangle}(x)$ denote a real number $x$
represented in a fixed point domain, with $I$ bits representing the integer
part and $F$ bits representing the decimal part.  The smallest number that
can be represented in this domain is $c_m=2^{-F}$.  Any mathematical
operations performed at the precision $\mathcal{F}_{\langle I,F \rangle}(x)$
will introduce errors, for which an upper bound can be
given~\cite{DBLP:conf/arith/BrainTRW15}.

We will use $\mathcal{F}_{\langle I_c,F_c \rangle}(x)$ to denote a real
number $x$ represented at the fixed-point precision of the controller, and
$\mathcal{F}_{\langle I_p,F_p \rangle}(x)$ to denote a real number $x$
represented at the fixed-point precision of the plant model ($I_c$ and $F_c$ are determined by the controller. We pick $I_p$ and $F_p$ for our synthesis such that $I_p \geq I_c$ and $\allowbreak F_p \geq F_c$).  Thus any mathematical operations in our modelled
digital controller will be in the range of $\mathcal{F}_{\langle I_c,F_c
\rangle}$, and all other calculations in our model will be carried out in the range of
$\mathcal{F}_{\langle I_p,F_p \rangle}$.  
The physical plant operates in the
reals, which means our verification phase must also account for the numerical error and quantization errors caused by representing the physical plant at the finite precision $\mathcal{F}_{\langle I_p,F_p \rangle}$.

\subsubsection{Effect on safety specification and stability}

Let us first consider the effect of the quantization errors on safety. 
Within the controller, state values are manipulated at low precision,
alongside the vector multiplication $Kx$.
%
%computed considering the error introduced by the digital controller caused by performing the vector multiplication $Kx$ at low digital precision.
The inputs are computed using the following equation: 
\begin{align*}
u_{k}&=-(\mathcal{F}_{\langle I_c,F_c \rangle}(K)\cdot\mathcal{F}_{\langle I_c,F_c \rangle}(x_{k})). 
\end{align*}

This induces two types of the errors detailed above: first, the truncation
error due to representing $x_k$ as $\mathcal{F}{\langle I_c,F_c
\rangle}(x_{k})$; and second, the rounding error introduced by the
multiplication operation.  We represent these errors as non-deterministic
additive noise.%, which might de-stabilize the closed-loop model.

An additional error is due to the representation of the plant dynamics, namely 
\begin{align*}
x_{k+1} &=\mathcal{F}_{\langle I_p,F_p \rangle}(A_d) \mathcal{F}_{\langle I_p,F_p \rangle}(x_{k}) + \mathcal{F}_{\langle I_p,F_p \rangle}(B_d)\mathcal{F}_{\langle I_p,F_p \rangle}(u_{k}).
\end{align*}
We address this error by use of interval
arithmetic~\cite{moore1966interval} in the verification phase.

Previous studies~\cite{gangli1} show that the FWL affects the poles and
zeros positions, degrading the closed-loop dynamics, causing steady-state
errors (see Appendix~\ref{sec:appendix:LTIbackground} for details) and
eventually de-stabilizing the system~\cite{Bessa16}.  However, since in this
paper we require stability only as a precursor to safety, it is sufficient
to check that the (perturbed, noisy) model converges to a neighborhood of
the equilibrium within the safe set (see Appendix~\ref{sec:stab_FWL}).

In the following, we shall disregard these steady-state errors (caused by
FWL effects) when stability is ensured by synthesis, and then verify its
safety accounting for the finite-precision errors.

%+++++++++++++++++++++++++++++++++++++++++++++++++++++++++++++++++++++++++++++++
\section{CEGIS of Safe Controllers for LTI Systems} 
\label{sec:CEGARIS} 
%+++++++++++++++++++++++++++++++++++++++++++++++++++++++++++++++++++++++++++++++

In this section, we describe our technique for synthesizing safe digital
feedback controllers using CEGIS.  For this purpose, we first provide the
synthesizer's general architecture, followed by describing our two
approaches to synthesizing safe controllers: the first one is a baseline
approach that relies on a na\"ive unfolding of the transition relation,
whereas the second uses abstraction to evaluate all possible executions of
the system.

% on an infinite time horizon.

%% We outline the general architecture of the program synthesizer used in both approaches \todo{which approaches - pls recall or clarify}. 
%% We then give the details of both methods of solution generalization \todo{define ``solution generalization''} that we compare.  
%% The first checks safety by unwinding the transition relation to a completeness threshold, 
%% whereas the second employs abstract acceleration.
 
%+++++++++++++++++++++++++++++++++++++++++++++++++++++++++++++++++++++++++++++++
\subsection{General architecture of the program synthesizer}
%+++++++++++++++++++++++++++++++++++++++++++++++++++++++++++++++++++++++++++++++
\label{synthesizer-general}

The input specification provided to the program synthesizer is of the form
$\exists P .\, \forall a.\, \sigma(a, P)$, where $P$ ranges over functions
(where a function is represented by the program computing it),
$a$ ranges over ground terms, and $\sigma$ is a quantifier-free formula. 
We~interpret the ground terms over some finite domain~$\mathcal{D}$.
The design of our synthesizer consists of two phases, an inductive synthesis
phase and a validation phase, which interact via a finite set of test
vectors {\sc inputs} that is updated incrementally.  Given the
aforementioned specification $\sigma$, the inductive synthesis procedure
tries to find an existential witness $P$ satisfying the specification
$\sigma(a, P)$ for all $a$ in {\sc inputs} (as opposed to all $a \in
\mathcal{D}$).
If the synthesis phase succeeds in finding a witness~$P$, this witness is a
candidate solution to the full synthesis formula.  We pass this candidate
solution to the validation phase, which checks whether it is a full solution
(i.e., $P$ satisfies the specification $\sigma(a, P)$ for all
$a\in\mathcal{D}$).  If this is the case, then the algorithm terminates. 
Otherwise, additional information is provided to the inductive synthesis
phase in the form of a new counterexample that is added to the {\sc inputs}
set and the loop iterates again.  More details about the general
architecture of the synthesizer can be found
in~\cite{DBLP:conf/lpar/DavidKL15}.
%
%% It is worth noting that each iteration of the loop adds a new input to
%% the finite set $\text{\sc inputs}$ that is used for synthesis.  Given
%% that the full set of inputs $\mathcal{D}$ is finite, this means that
%% the refinement loop can only iterate a finite number of times.

%+++++++++++++++++++++++++++++++++++++++++++++++++++++++++++++++++++++++++++++++
\subsection{Synthesis problem: statement (recap) and connection to program synthesis}
%+++++++++++++++++++++++++++++++++++++++++++++++++++++++++++++++++++++++++++++++

% \todo{The following paragraph has been already stated above. Let us clarify this is a useful repetition to recap things?} 

At this point, we recall the the synthesis problem that we solve in this
work: we seek a digital feedback controller $K$ (see
Eq.~\ref{eq:controlaction}) that makes the closed-loop plant model safe for
initial state $x_0$, reference signal $r_k$ and input $u_k$ as defined in
Sec.~\ref{ssec:safety}.  We consider non-deterministic initial states within
a specified range, the reference signal to be set to zero, saturation on the
inputs, and account for digitization and quantization errors introduced by
the controller.

When mapping back to the notation used for describing the general
architecture of the program synthesizer, the controller $K$ denotes $P$,
$(x_0, u_k)$ represents $a$ and
$\phi_\mathit{stability} \wedge \phi_\mathit{input} \wedge
\phi_\mathit{init} \wedge \phi_\mathit{safety}$
denotes the specification $\sigma$.

\subsection{Na\"ive Approach: CEGIS with multi-staged verification}
\label{sec:CEGIS-precision-incrementation}
%+++++++++++++++++++++++++++++++++++++++++++++++++++++++++++++++++++++++++++++++

\begin{figure}[htb]
{\scriptsize
\centering
%\resizebox{.8\textwidth}{!}
{
 \begin{tikzpicture}[scale=0.3,->,>=stealth',shorten >=.2pt,auto, semithick, initial text=, ampersand replacement=\&,]
  \matrix[nodes={draw, fill=none, shape=rectangle, minimum height=.2cm, minimum width=.2cm, align=center
},
          row sep=.6cm, column sep=.9cm] {
   \coordinate (aux1);
   \& \coordinate (aux2);
   \&;\\
   \coordinate (aux3);
   \& \coordinate (aux4);
   \&;\\
   \coordinate (aux5);
   \& \coordinate (aux6);
   \&;\\
   \node[minimum width=1.5cm, minimum height=0.6cm, fill=gray!20] (synth) {{\sc 1.Synthesize}};
   \&
   complexnode/.pic={ 
     \node[rectangle,draw,dotted,
	minimum width=6cm,
	minimum height=1cm,
        pattern=north west lines, pattern color=gray!20,
	label={\sc ~~~~~~~~~~~~Verify},] (verif) {};
     \node[minimum width=1cm, minimum height=0.6cm, fill=gray!20] (verif1) at ([xshift=-2cm]verif.center) {{\sc 2.Safety}};
     \node[minimum width=1cm, minimum height=0.6cm, fill=gray!20] (verif2) at ([xshift=0cm]verif.center) {{\sc 3.Precision}};
     \node[minimum width=1cm, minimum height=0.6cm, fill=gray!20] (verif3) at ([xshift=2cm]verif.center) {{\sc 4.Complete}};
     %\node[minimum width=1cm, minimum height=0.6cm, fill=gray!20] (verif4) at ([xshift=3.1cm]verif.center) {{\sc 5.Sampling}};
   } 
   \& \node[ellipse, fill=gray!20] (done) {{\sc Done}};\\
   \& \\
   \node[minimum height=0cm] (gp) {\sf Program Search};
   \&
   complexnode/.pic={ 
     \coordinate (aux);
   \node (bmc) at ([xshift=-2cm]aux.center) {\sf BMC-based \\ \sf Verifier};
   \node (fp)  at ([xshift=0cm]aux.center) {\sf Fixed-point \\ \sf Arithmetic\\ \sf Verifier};
   \node (sv)  at ([xshift=2cm]aux.center) {\sf Completeness\\ \sf Verifier};
   %\node (cv)  at ([xshift=3.1cm]aux.center) {\sf Sampling\\Verifier};
   }   
    \\
  };

   \path
    ([yshift=2em]synth.east) edge node[xshift=-0.5em,align=center] {$K$} ([yshift=2em]verif1.west)
    ([yshift=-2em]verif1.west) edge node {C-ex} ([yshift=-2em]synth.east)
    ([xshift=-5em]fp.north) edge node[align=center]  {T/F} ([xshift=-5em]verif2.south)
    ([xshift=-5em]sv.north) edge node[align=center]  {T/F} ([xshift=-5em]verif3.south)
    %([xshift=-5em]cv.north) edge node[align=center]  {T/F} ([xshift=-5em]verif4.south)
    ([xshift=5em]verif1.south) edge node[align=center] {$K$} ([xshift=5em]bmc.north)
    ([xshift=5em]verif2.south) edge node[align=center] {$K$} ([xshift=5em]fp.north)
    ([xshift=5em]verif3.south) edge node[align=center] {$K$} ([xshift=5em]sv.north)
    %([xshift=5em]verif4.south) edge node[align=center] {$K$} ([xshift=5em]cv.north)
    ([xshift=-5em]bmc.north) edge node[align=center]  {UNSAT/\\model} ([xshift=-5em]verif1.south)
    (verif) edge node {PASS} (done)
    ([xshift=5em]synth.south) edge node[align=center] {Inputs} ([xshift=5em]gp.north)
    ([xshift=-5em]gp.north) edge node[align=center] {UNSAT/\\$K$} ([xshift=-5em]synth.south)
    (aux3) edge (synth.north);
   \path[-]
   (verif2.north) edge node[align=center] {} ([xshift=0cm]aux6)
   ([xshift=0cm]aux6) edge node[align=center] {Increase Precision} (aux5)
   (verif3.north) edge node[align=center] {} ([xshift=6.7cm]aux4)
   ([xshift=6.7cm]aux4) edge node[align=center] {Increase Unfolding Bound} (aux3);
   %(verif4.north) edge node[align=center] {} ([xshift=10.5cm]aux2)
   %([xshift=10.5cm]aux2) edge node[align=center] {Increase Sampling Rate} (aux1);

 \end{tikzpicture}
}}
\caption{CEGIS with multi-staged verification}
\label{fig:CEGIS-precision-increment}
\end{figure}
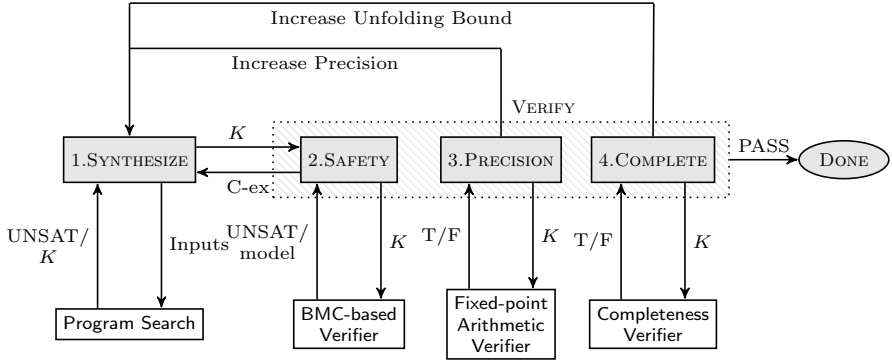

%\subsubsection{Controller synthesis}

An overview of the algorithm for controller synthesis is given in
Fig.~\ref{fig:CEGIS-precision-increment}.  One important observation is that
we verify and synthesize a controller over $k$ time steps.  We~then compute
a completeness threshold $\overline{k}$~\cite{DBLP:conf/vmcai/KroeningS03}
for this controller, and verify correctness for $\overline{k}$ time steps. 
Essentially, $\overline{k}$ is the number of iterations required to
sufficiently unwind the closed-loop state-space system, 
which ensures that the boundaries are not violated for any other $k{>}\overline{k}$.

\begin{theorem} There exists a finite $\overline{k}$ such that it is
sufficient to unwind the closed-loop state-space system up to $\overline{k}$
in order to ensure that $\phi_\mathit{safety}$ holds. 
\end{theorem}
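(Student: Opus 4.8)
The plan is to exploit the fact that any candidate $K$ reaching this verification stage has already passed the stability check $\phi_\mathit{stability}$, so the closed-loop matrix $M := A_d - B_d K$ is Schur, i.e.\ $\rho(M) < 1$; combined with the compactness of the initial region $\phi_\mathit{init}$ and the boundedness of the additive noise (the ADC/DAC quantisation and round-off terms of Sec.~\ref{sec:numeric_rep}), this forces every admissible trajectory to be eventually trapped in a small forward-invariant neighbourhood of the equilibrium that lies inside the safe box. The completeness threshold $\overline{k}$ is then the number of steps after which this trapping is guaranteed for every admissible initial state and noise sequence, so that any violation of $\phi_\mathit{safety}$ must already occur within the first $\overline{k}$ steps.

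First I would fix a weighted $2$-norm $\|\cdot\|$ obtained from a solution $P \succ 0$ of the discrete Lyapunov inequality $M^\top P M - P \prec 0$, which exists precisely because $M$ is Schur, together with constants $C \ge 1$ and $\lambda \in (0,1)$ satisfying $\|M^k\| \le C\lambda^k$ for all $k$. Writing the perturbed recurrence as $x_{k+1} = M x_k + w_k$ with $\|w_k\| \le \mu$, where $\mu$ is the explicitly computable bound on the combined quantisation/round-off noise, the standard geometric estimate gives
\begin{equation*}
\|x_k\| \;\le\; C\lambda^k\, R_0 \;+\; \frac{C\mu}{1-\lambda} \;=:\; C\lambda^k R_0 + r_\infty, \qquad R_0 := \sup_{x_0 \models \phi_\mathit{init}} \|x_0\| .
\end{equation*}
Next I would exhibit a sublevel set $\Omega = \{x : x^\top P x \le c\}$ that is contained in the safe hyper-box $\bigwedge_i \underline{x_i} \le x_i \le \overline{x_i}$; the existence of a suitable $c > 0$ relies on the equilibrium, and more precisely the ultimate-bound ball of radius $r_\infty$, lying in the interior of the safe box --- which is exactly the ``converges to a neighbourhood of the equilibrium within the safe set'' property enforced by the stability and precision checks. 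By the Lyapunov inequality, $\Omega$ is invariant under the nominal dynamics, and enlarging $c$ slightly makes it invariant under the $\mu$-perturbed dynamics. Finally, set $\overline{k}$ to the least $k$ for which $C\lambda^k R_0 + r_\infty$ places the state inside $\Omega$; this is finite because the first term vanishes. Then for $k \le \overline{k}$ safety is decided by the bounded unfolding, while for $k > \overline{k}$ it follows from $x_{\overline{k}} \in \Omega$, the invariance of $\Omega$, and $\Omega \subseteq$ safe box; hence unwinding up to $\overline{k}$ is sufficient.

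The main obstacle is the geometric step linking the ellipsoidal invariant set $\Omega$ to the axis-aligned safe box and to the hyper-box $\phi_\mathit{init}$: one must choose $P$ (hence the norm) so that $\Omega$ genuinely fits inside the box, and a purely quadratic certificate may be too conservative --- in that case I would instead build a polyhedral (box-shaped) invariant set directly from a contraction property of $M$, at the cost of a more delicate fixed-point argument. A secondary difficulty, essential for the algorithm though not for the bare existence claim, is making $C$ and $\lambda$ explicit and tight (via the Jordan form of $M$, or via the $P$ from the Lyapunov equation) so that the computed $\overline{k}$ is small: the transient overshoot captured by $C$ can be large even when $\rho(M)$ is well below $1$, and an underestimate would render the threshold unsound while an overestimate merely degrades performance.
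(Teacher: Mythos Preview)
Your argument is sound for the bare existence claim, but it takes a genuinely different route from the paper. The paper's proof is spectral rather than Lyapunov-theoretic: it observes that along real-eigenvalue subspaces the distance to the origin decreases monotonically, while along a complex-eigenvalue subspace the trajectory spirals with angular step $\vartheta T_s$ (where $\vartheta$ is the imaginary part of the eigenvalue), so that after $\overline{k} = \lceil 2\pi/(\vartheta T_s)\rceil$ steps a full rotation has been completed and the state is strictly closer to the origin than it was $\overline{k}$ steps earlier; taking the slowest such rotation over all complex eigenvalues yields the threshold. No Lyapunov matrix $P$, no invariant ellipsoid, no noise term enters the paper's argument.

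The trade-offs are clear. The paper's approach is short and immediately constructive --- $\overline{k}$ is read off from the eigenvalues of $A_d - B_d K$, which is exactly what the \textsc{complete} stage needs --- but it is informal about how ``closer to the origin after one period'' actually implies containment in an axis-aligned box for all subsequent times, and it treats only the nominal (noiseless, distinct-eigenvalue) dynamics. Your Lyapunov/ISS route is more rigorous and more general: it accommodates the quantisation and round-off perturbations of Sec.~\ref{sec:numeric_rep} and yields a genuine forward-invariant set inside the safe box. The price is that your $\overline{k}$ depends on the constants $C,\lambda$ extracted from $P$, which, as you note, can substantially overestimate the threshold compared to the rotation-period formula, and the ellipsoid-in-box fitting step is an extra proof obligation the paper's argument sidesteps entirely.
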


\begin{proof}
A stable control system is known to have converging dynamics.  Assume the
closed-loop matrix eigenvalues are not repeated (which is sensible to do,
since we select them).  The distance of the trajectory from the reference
point (origin) decreases over time within subspaces related to real-valued
eigenvalues; however, this is not the case in general when dealing with
complex eigenvalues.  Consider the closed-loop matrix that updates the
states in every discrete time step, and select the eigenvalue $\vartheta$
with the smallest (non-trivial) imaginary value.  Between every pair of
consecutive time steps $k\,T_s$ and $(k+1)\,T_s$, the dynamics projected on
the corresponding eigenspace rotate $\vartheta T_s$ radians.  Thus, taking
$\overline{k}$ as the ceiling of $\frac{2\pi}{\vartheta T_s}$, after
$k{\geq}\overline{k}$ steps we have completed a full rotation, which results
in a point closer to the origin.  The synthesized $\overline{k}$ is the completeness threshold.
\qed 
\end{proof}

\medskip

Next, we describe the different phases in Fig.~\ref{fig:CEGIS-precision-increment}
(blocks 1 to 4) in detail.

%\todo{please eventually itemize next}
\begin{enumerate}

\item The inductive synthesis phase ({\sc synthesize}) uses BMC to
compute a candidate solution $K$ that satisfies both the stability criteria
(Sec.~\ref{ssec:stability}) and the safety specification
(Sec.~\ref{ssec:safety}).  To synthesize a controller that satisfies the
stability criteria, we require that a computed polynomial
satisfies Jury's criterion~\cite{fadali}.  The details of this calculation can be found in
the Appendix.

Regarding the second requirement, we synthesize a safe controller by
unfolding the transition system $k$ steps and by picking a controller $K$
and a single initial state, such that the states at each step do not violate
the safety criteria.  That is, we ask the bounded model checker if there
exists a $K$ that is safe for at least one $x_0$ in our set of all possible
initial states.  This is sound if the current $k$ is greater than the completeness
threshold.  We~also assume some precision $\langle I_p,F_p\rangle$ for the
plant and a sampling rate.  The checks that these assumptions hold are
performed by subsequent {\sc verify} stages.

%\begin{algorithm}[]
%\scriptsize
%\begin{algorithmic}[1]
%\Function{$stabilityCheck()$}{}
%  \State computecharPoly($A_d - B_dK$)
%  \State assert(Jury's criteria hold)
  %\State Return $K$
%\EndFunction
%\end{algorithmic}
%\label{alg:stabilitycheck}
%\end{algorithm}

\begin{algorithm}[]
%\scriptsize
\begin{algorithmic}[1]
\Function{$\mathit{safetyCheck}()$}{}
\State assert($ \underline{u}  \leq u \leq \overline{u}$)
 \State set $x_0$ to be a vertex state, e.g., $[\underline{x_0},\underline{x_0}]$	
\For {($c=0;~c < 2^\mathit{Num\_States};~c\mbox{++}$)}
  %\State assert($ \underline{x_0}  \leq x_0 \leq \overline{x_0}$)
	\For{($i=0;~i< k;~i\mbox{++}$)}
		%\State $u = (\langle I_p,F_p\rangle)((\langle I_c,F_c\rangle)K * (\langle I_c,F_c\rangle) x)$
		\State $u = (plant\_typet)((controller\_typet)K * (controller\_typet) x)$
		\State $x = A * x + B * u$
		\State assert($\underline{x} \leq x \leq \overline{x}$ )
  	\EndFor
  	\State set $x_0$ to be a new vertex state
  	\EndFor
  %\State Return $K$
\EndFunction
\end{algorithmic}
\caption{Safety check\label{alg:safetycheck}}
\end{algorithm}

\item The first {\sc verify} stage, {\sc safety}, checks that the candidate
solution $K$, which we synthesized to be safe for at least one initial
state, is safe for \emph{all} possible initial states, i.e., does not reach
an unsafe state within $k$ steps where we assume $k$ to be under the
completeness threshold.  After unfolding the transition system corresponding
to the previously synthesized controller $k$ steps, we check that the safety
specification holds for any initial state. This is shown in Alg.~~\ref{alg:safetycheck}.

\item The second {\sc verify} stage, {\sc precision}, 
 restores soundness with respect to the plant's precision
by using interval arithmetic \cite{moore1966interval} to validate the 
operations performed by the previous stage. 

\item The third {\sc verify} stage, {\sc complete}, checks that the current
$k$ is large enough to ensure safety for any $k'{>}k$.  Here, we compute the
completeness threshold $\overline{k}$ for the current candidate controller $K$ and
check that $k{\geq}\overline{k}$.  This is done according to the argument
given above and illustrated in Fig.~\ref{fig:ct}.

\end{enumerate}

\begin{figure*}[t]
\centering
\includegraphics[width=0.6\textwidth]{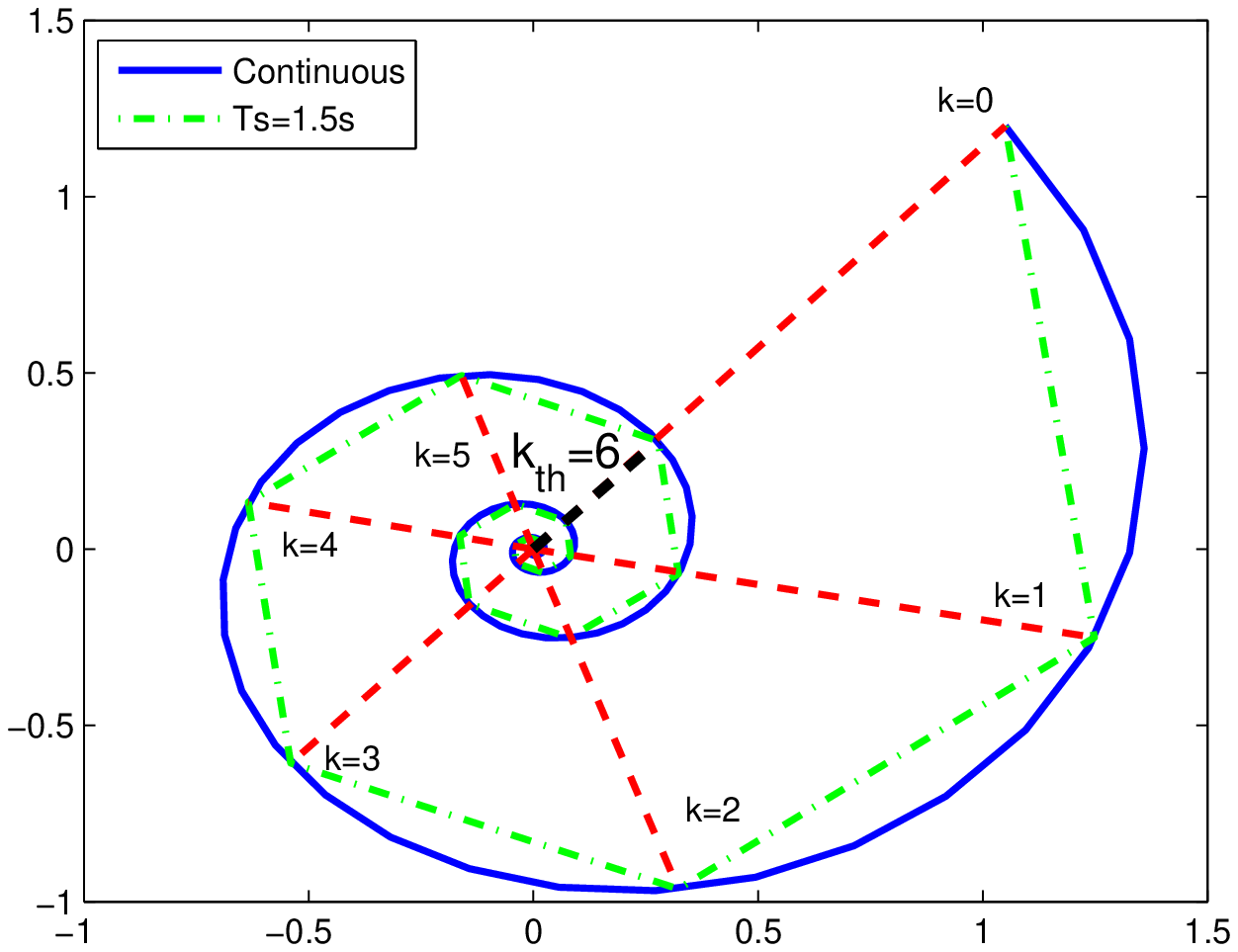}
\vspace{0.1cm}
\caption{Completeness threshold for multi-staged verification. $T_s$ is the time step for the time discretization of the control matrices.}
\label{fig:ct}
\end{figure*}

Checking that the safety specification holds for any initial state can be 
computationally expensive if the bounds on the
allowed initial states are large. 

\begin{theorem} 
If a controller is safe for each of the corner cases of our hypercube of
allowed initial states, it is safe for any initial state in the hypercube. 
\end{theorem}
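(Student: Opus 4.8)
The plan is to exploit the \emph{affine} dependence of the closed-loop dynamics on the initial state together with the \emph{convexity} of both the safe region and the set of admissible initial states. Write $\Gamma := A_d - B_d K$ for the closed-loop matrix (with $r_k \equiv 0$), let $S := \prod_{i=1}^{n}[\underline{x_{i}},\overline{x_{i}}]$ be the safe hyper-box of \eqref{eq:safetyliteral}, and let $X_0 := \prod_{i=1}^{n}[\underline{x_{i,0}},\overline{x_{i,0}}]$ be the hyper-box of admissible initial states, whose $2^n$ corners $v_1,\dots,v_{2^n}$ are exactly its extreme points; hence every $x_0 \in X_0$ can be written as $x_0 = \sum_j \lambda_j v_j$ with $\lambda_j \ge 0$ and $\sum_j \lambda_j = 1$. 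Unrolling \eqref{eq:closedloopss}, the state reached at step $k$ from $x_0$ has the form $x_k = \Gamma^k x_0 + (\text{terms independent of }x_0)$; more generally, once the fixed-point truncation of $x_k$ inside the controller, the rounding of the product $Kx_k$, the interval-arithmetic over-approximation of the plant update, and the input saturation are all modeled as bounded, non-deterministic additive perturbations (as in Section~\ref{sec:numeric_rep}), the set $R_k(x_0)$ of states the sound verification reasons about at step $k$ from $x_0$ has the form $R_k(x_0) = \Gamma^k x_0 \oplus N_k$, where $\oplus$ is Minkowski sum and $N_k$ is a \emph{convex} set not depending on $x_0$.

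First I would record the two convexity facts driving the argument: (i) $S$ is convex, being a product of intervals; and (ii) for a convex nonempty set $N$ and scalars $\lambda_j \ge 0$ with $\sum_j \lambda_j = 1$ one has $\sum_j \lambda_j N = N$. Then, for $x_0 = \sum_j \lambda_j v_j$, linearity of $\Gamma^k$ and fact (ii) give $R_k(x_0) = \Gamma^k\!\left(\sum_j \lambda_j v_j\right) \oplus N_k = \sum_j \lambda_j \Gamma^k v_j \oplus \sum_j \lambda_j N_k = \sum_j \lambda_j\bigl(\Gamma^k v_j \oplus N_k\bigr) = \sum_j \lambda_j R_k(v_j)$. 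By hypothesis the controller is safe from every corner, i.e.\ $R_k(v_j) \subseteq S$ for all $j$ and all $k$, so $R_k(x_0) \subseteq \sum_j \lambda_j S = S$ by (i)--(ii). Since by Theorem~1 it suffices to consider $k$ up to the completeness threshold $\overline{k}$, and the displayed inclusion holds for each such $k$, we conclude that $\phi_\mathit{safety}$ holds from the arbitrary $x_0 \in X_0$.

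The main obstacle is not the convexity bookkeeping but justifying the structural claim that $R_k(x_0)$ is an affine image of $x_0$ shifted by an $x_0$-independent convex set, because several ingredients of the model are genuinely nonlinear: the fixed-point truncation $\mathcal{F}_{\langle I_c,F_c\rangle}(\cdot)$ is a step function, the multiplication $Kx_k$ carries a state-dependent rounding error, and the saturation of $u_k$ is a clamp, none of which preserves affinity or convexity pointwise. The resolution I would spell out is exactly the over-approximation already adopted in Section~\ref{sec:numeric_rep}: each such effect is replaced by a non-deterministic additive term ranging over a fixed bounded (hence, without loss, convex, e.g.\ interval-shaped) set, uniformly over all states and all initial conditions, and the saturated input is subsumed by an $x_0$-independent bounded-input term in $B_d u_k$; under this sound abstraction the reachable-set recursion is affine in $x_0$ with an $x_0$-independent convex offset, and the argument above goes through. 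A secondary point worth stating carefully is that the corner check of Alg.~\ref{alg:safetycheck} ranges over all $2^n$ vertices and, for each, over all realizations of the non-determinism, so the inclusions $R_k(v_j)\subseteq S$ invoked above are precisely what that check certifies.
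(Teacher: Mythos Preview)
Your proof is correct and shares the paper's core idea: write an arbitrary $x_0 \in X_0$ as a convex combination $\sum_j \lambda_j v_j$ of the hypercube vertices, use linearity of $\Gamma^k = (A_d - B_d K)^k$ to express the $k$-step state as the same convex combination of the vertex trajectories, and conclude by convexity of the safe box $S$. The paper's own proof stops exactly there, working with the noiseless dynamics $x_k = \Gamma^k x_0$ and leaving all quantization and rounding considerations outside the argument.

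Where you go further is in carrying the additive-noise over-approximation through the argument explicitly: you write the verified reach set as $R_k(x_0) = \Gamma^k x_0 \oplus N_k$ with an $x_0$-independent convex $N_k$, and use the identity $\sum_j \lambda_j N_k = N_k$ to recover the convex-combination structure at the level of \emph{sets} rather than points. This is a genuine (and correct) strengthening, since it makes the theorem apply to the sound over-approximation actually checked in the verification phase rather than only to the idealized linear model. Your discussion of why the nonlinear ingredients (fixed-point truncation, state-dependent rounding, input saturation) must first be absorbed into such an $x_0$-independent additive term before the convexity argument applies is also well taken, and is a point the paper's proof does not address at all. In short: same route, but you supply the soundness bookkeeping the paper leaves implicit.
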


Thus we only need to check $2^n$ initial states, where $n$
is the dimension of the state space (number of continuous variables). 
\begin{proof}
Consider the set of initial states, $X_0$, which we assume to be convex since it is a hypercube. 
Name $v_i$ its vertexes, where $i=1,\ldots, 2^n$.  
Thus any point $x \in X_0$ can be expressed by convexity as $x = \sum_{i=1}^{2^n} \alpha_i v_i$, 
where $\sum_{i=1}^{2^n} \alpha_i =1$. Then if $x_0=x$, we obtain 
\begin{align*}
x_k   &= (A_d - B_d K)^k x = (A_d - B_d K)^k \sum_{i=1}^{2^n} \alpha_i v_i 
      = \sum_{i=1}^{2^n} \alpha_i (A_d - B_d K)^k v_i = \sum_{i=1}^{2^n} \alpha_i x_k^i, 
 \end{align*}     
where $x_k^i$ denotes the trajectories obtained from the single vertex
$v_i$.  We~conclude that any $k$-step trajectory is encompassed, within a
convex set, by those generated from the vertices. 
\qed
\end{proof}

\subsubsection{Illustrative Example} \label{sec:running-ex}

We illustrate our approach with an example,
extracted from~\cite{Franklin15}.
Since we have not learned any information about the system yet,
we pick an arbitrary candidate solution (we always choose $K=[0
  \ 0 \ 0]^T$ in our experiments to simplify reproduction), and a
precision of $I_p=13$, $F_p=3$.  In the first
{\sc verify} stage, the {\sc safety} check finds the counterexample
$ x_0 = [-0.5 \ 0.5 \ 0.5] $.
After adding the new counterexample to its sets of {\sc inputs}, {\sc
  synthesize} finds the candidate solution $K=[0\,0
  \,0.00048828125]^T$, which prompts the {\sc safety} verifier to
return $x_0=[-0.5\,-0.5\,-0.5]$ as the new counterexample.

In the subsequent iteration, the synthesizer is unable to find further 
suitable candidates and it returns UNSAT, meaning that the current precision is
insufficient.  Consequently, we increase the precision the plant is modelled with to $I_p=17$,
$F_p=7$. We increase the precision by 8 bits each step in order to be compliant with the CBMC type API.
Since the previous counterexamples were obtained at lower precision,
we remove them from the set of counterexamples.  Back in the {\sc
  synthesize} phase, we re-start the process with a candidate solution
with all coefficients $0$.  Next, the {\sc safety} verification stage
provides the first counterexample at higher precision, $x_0=[-0.5
  \ 0.5 \ 0.5]$ and {\sc synthesize} finds $K=[0 \ 0.01171875
  \ 0.015625]^T$ as a candidate that eliminates this counterexample.
However, this candidate triggers the counterexample
$x_0=[0.5\ -0.5\ -0.5]$ found again by the {\sc safety} verification
stage.  In the next iteration, we get the candidate $K=[0 \ 0
  \ -0.015625]$, followed by the counterexample $x_0 = [0.5 \ 0.5
  \ 0.5]$. Finally, {\sc synthesize} finds the candidate $K=[0.01171875
  \ -0.013671875 \ -0.013671875]^T$, which is validated as a final
solution by all verification stages.

%+++++++++++++++++++++++++++++++++++++++++++++++++++++++++++++++++++++++++++++++
\subsection{Abstraction-based CEGIS}
\label{sec:CEGIS-abstraction-refinement}
%+++++++++++++++++++++++++++++++++++++++++++++++++++++++++++++++++++++++++++++++

%\todo{Refer to the SAS paper in this section.}

%\begin{itemize}
%\item[1.]

  The na\"ive approach described in Sec.~\ref{sec:CEGIS-precision-incrementation}
  %uses a multi-staged verification phase and performs solution generalization based on
%the plant precision, the number of loop unrollings, 
  %and the sampling rate. Essentially, it
  synthesizes a controller for an individual initial state
and input with a bounded time horizon and, subsequently, it generalizes it to all reachable states,
inputs, and time horizons during the verification phase.
Essentially, this approach relies on the symbolic
simulation over a bounded time horizon of individual initial states and inputs that form
part of an uncountable space and tries to generalize it for an
infinite space over an infinite time horizon.

  %% solution over all times by refining the number of witnesses.  
  %% In order to perform the safety check, this approach uses abstract acceleration
  %% \cite{cattaruzza2015unbounded} to compute the controller's set of states. 
  %% \todo{clarify: what is the ``set of states'' of the controller? }
%\end{itemize}

%% The verification of a hybrid continuous/discrete system over an
%% unbounded time, requires the non-deterministic simulation of an
%% infinite set of states, which is undecidable.

%% Our na\"ive approach
%% creates an abstraction of the system by establishing bounds on the
%% error between such a system and a discrete-time/continuous-space
%% model.  This model is still however, quite complex, hard to verify,
%% and many of the states involved are inconsequential to the safety
%% specification.

Conversely, in this section, 
we  find a controller for a continuous initial set of states and set of inputs, 
  over an abstraction of the continuous dynamics \cite{cattaruzza2015unbounded} that conforms to
  witness proofs at specific times. %% It further generalizes the
  Moreover, this approach uses abstraction refinement enabling us to 
  always start with a very simple description regardless of the complexity of the overall
dynamics, and only expand to more complex models when a solution
cannot be found.

  The CEGIS loop for this approach is illustrated in Fig.~\ref{fig:CEGARIS}.

% Now that we have explained the verification process, we may proceed to describe the synthesis procedure, which is illustrated in figure \ref{fig:CEGARIS}.
% Our model, rather than look at a bisimulation \todo{specific notion that has not been formally define - do not use} in discrete time, explores a much higher abstraction, 
% evaluating the unbounded time-continuous space reach tube as a set of inequalities that represent only the safety specification.
% We have also identified 
% the quantization noise as a bounded non-deterministic input.

\begin{enumerate}
\item 
  We start by doing some preprocessing:%%  that 
  %% can be computationally intensive and we don't want it to be part of the CEGIS loop.
% Let us recall the formulas used for single-input and single-output (SISO) systems 
  % in sections \ref{sec:reachable} and \ref{sec:observable}.
  \begin{enumerate}
\item Compute the characteristic polynomial of the matrix $(A_d-B_dK)$ as 
$P_a(z) = z^n+\sum_{i=1}^n{(a_i-k_i)z^{n-i}}$. 
%\blue{[shouldn't there be a b term in the polynomial's coefficients here? No. the polynomial only evaluates the poles]}
%%   A controllable system with quantization will have the following dynamics, where
%% we use underscript ``n'' to denote a noise term,
%% ``cf'' to denote Controllable Canonical Form,
%% $\nu_1$, $\nu_2$ are quantization errors
%% caused by the ADC and DAC conversions, respectively, 
%% and $\nu_3$ is the round-off error.
%% \todo{Why do we use $_t$ here?}
%% %
%% \begin{align}
%% \label{eq:observer_LTI_cf}
%% x_{k+1}=&\mat{A}_{t}x_k+\mat{B}_{t} r_k+\mat{B}_{n} (\nu_1 + \nu_2 + \nu_3)
%% \end{align}
%% \begin{align*}
%% \mat{A}_{t}=&\mat{T}\mat{A}_d\mat{T}^{-1}&
%% \mat{B}_{t}=&\mat{T}\mat{B}_d&
%% \mat{B}_{n}=&[1 \cdots 1]^T&
%% \mat{T}=&\mat{W}_{cf}\mat{W}^{-1}&
%% \end{align*}

%% \begin{align*}
%% \mat{A}_{t}=\mat{T}\mat{A}_d\mat{T}^{-1}=&\left[
%% \begin{array}{ccccc}
%% 0&1&0&\cdots&0\\
%% 0&0&1&\cdots&0\\
%% \vdots&\vdots&\vdots&\ddots&\vdots\\
%% 0&0&0&\cdots&1\\
%% k_n-a_n&k_{n-1}-a_{n-1}&k_{n-2}-a_{n-2}&\cdots&k_1-a_1
%% \end{array}\right]
%% \label{eq:cf_SISO_2}
%% \end{align*}

%% \begin{enumerate}
%% \item Calculate $\mat{T}$ as $\mat{W}_{cf}\mat{W}^{-1}$. %using equations \eqref{eq:rncf}\eqref{eq:rcf}\eqref{eq:to_cf}
%% \item Extract the characteristic polynomial of the above matrix:
%% %
%\begin{align*}
%$P_a = z^n+\sum_{i=1}^p{(a_i-k_i)z^{n-i}}$
%\end{align*}

\item Calculate the noise set $N$ from the quantizer resolutions and estimated round-off errors: %($\frac{q_1}{2},\frac{q_2}{2},q_3$)
$$N=\left \{ \nu_1+\nu_2+ \nu_3 : \nu_1 \in \left[-\frac{q_1}{2}\ \ \frac{q_1}{2}\right] 
\wedge \nu_2 \in \left[-\frac{q_2}{2}\ \ \frac{q_2}{2}\right]  \wedge \nu_3 \in \left[-q_3\ \ q_3\right]  \right \}\nonumber$$
where  $q_1$ is the error introduced by the truncation in the ADC, $q_2$ is
the error introduced by the DAC and $q_3$ is the maximum truncation and
rounding error in $u_k=-K \cdot \mathcal{F}_{\langle I_c,F_c \rangle}(x_k)$ as
discussed in Section~\ref{sec:numeric_rep}.  More details on how to model
quantization as noise are given in
Appendix~\ref{appendix:quantization-noise}.

\item Calculate a set of initial bounds on $K$, $\phi_\mathit{init}^{K}$,
based on the input constraints %\blue{[how is this practically done?With the equation below]}.  
$$(\phi_\mathit{init} \wedge \phi_\mathit{input} \wedge u_k=-K x_k)
\Rightarrow \phi_\mathit{init}^{K}$$
Note that these bounds will be used by the {\sc synthesize} phase to reduce the size of the solution space. 

\end{enumerate}
\item In the {\sc synthesize} phase, we synthesize a candidate controller
%We begin with a minimum set of linear inequalities describing the constraints caused by the input and the initial state during a single iteration, in addition to a stability 
% specification defined by Jury's 
%criteria. %, and try to synthesise a controller that meets these specs.
% For a stabilized closed-loop, which is necessary for our safety property, we need
% these roots to remain within the unit circle, which we verify using Jury's criteria.
% Notice that in our polynomials $c_0=1$ and we only use reals.
% Once we have established convergence, we may examine the 
% safety specification $\phi_\mathit{safety}$.
  $K \in \mathbb{R}\langle I_c,F_c\rangle^n$ that satisfies
  $\phi_\mathit{stability} \wedge \phi_\mathit{safety} \wedge \phi_\mathit{init}^{K}$ by invoking a SAT solver.
%%   We obtain it by solving the SAT formula 
%% $$\mathcal{J}_K=\mathcal{J}\langle I,F \rangle (P_{a-k},\mat{T})\wedge \phi_\mathit{safety}$$ 
%% that satisfies the input constraint and Jury's criteria for 
%% $P_{a-k}=z^n+\sum_{i=1}^n (a_i-k_i) z^{n-i} : k_i \in \mat{K} \wedge \tilde{\mat{K}}=\mat{K}\mat{T}$,
%% where $\mathcal{J}\langle I,F \rangle (P,\mat{T})$ is a program describing Jury's method.
If there is no candidate solution we return UNSAT and exit the loop.
\item Once we have a candidate solution, we perform a safety verification % If this step fails, we find a counterexample iteration and initial state and create a new constraint to refine our abstraction.
  of the 
  %consisting of evaluating the
  progression of the system from $\phi_\mathit{init}$ over time,
$x_{k} \models \phi_\mathit{safety}$. %\blue{why $k+1$ here?}. %% For this purpose we require 
%% an initial set $\phi_\mathit{init}= x_0 \in [\underline{x_0} \ \overline{x_0}]$ 
%% from which the system progresses.
% Obviously, this set $x_0$ 
% must satisfy the specification; otherwise, the system will be unsafe to begin with.
% We accept a specification in the form $\mat{E}\mat{T}x_0<\mat{f}$. 
% The presence of $\mat{T}$ is because this will typically be given in the original 
% state-space.
%% Next, we use abstract acceleration to get a second abstraction that 
  %% encompasses the accelerated continuous reach-tube over an unbounded time.
  In order to compute the progression of point $x_0$ at iteration $k$,
  we accelerate the dynamics of the closed-loop system and obtain:
  %we take the
  %the closed-loop model:
  %$x_{k+1}=\mat{A}_{t}x_k+\mat{B}_{n} (\nu_1 + \nu_2 + \nu_3)$
  %and accelerate it to:
%
\begin{align}
%\label{eq:acc_observer_LTI_cf}
\hspace{-0.1in} x=&(A_d-B_dK)^kx_0
%+\sum_{i=0}^{k-1} \mat{A}_t^i \mat{B}_{t} r_i
+\sum_{i=0}^{k-1} (A_d-B_dK)^i B_{n}(\nu_1+\nu_2+\nu_3) : B_n= [1 \cdots 1]^T
\end{align}
As this still requires us to verify the system for every $k$ up to infinity,
we use abstract acceleration again to obtain the reach-tube, i.e., the set
of all reachable states at all times given an initial set
$\phi_\mathit{init}$:
\begin{align}
\label{eq:aa_observer_LTI_cf}
\hat{X}^\#
=\mathcal{A} X_0 + \mathcal{B}_{n} N, \quad
X_0 =\left \{x : x \models \phi_\mathit{init} \right\}, 
\end{align}
where $\mathcal{A}=\bigcup_{k=1}^\infty (A_d-B_dK)^k,
\mathcal{B}_{n}=\bigcup_{k=1}^\infty \sum_{i=0}^k(A_d-B_dK)^iB_{n}$ are
abstract matrices for the closed-loop system~\cite{cattaruzza2015unbounded},
whereas the set $N$ is non-deterministically chosen.

We next evaluate $\hat{X}^\# \models \phi_\mathit{safety}$.  If the
verification holds we have a solution, and exit the loop.  Otherwise, we
find a counterexample iteration $k$ and corresponding initial point $x_0$
for which the property does not hold, which we use to locally refine the
abstraction.  When the abstraction cannot be further refined, we provide
them to the {\sc abstract} phase.
%
%\todo{we should have a better argument here.}
%\end{enumerate} 
\item If we reach the {\sc abstract} phase, it means that the candidate solution is not valid,
  in which case we must refine the abstraction used by the synthesizer.
\begin{enumerate}
\item Find the constraints that invalidate the property
 as a set of counterexamples for the eigenvalues, which we define as $\phi_\Lambda$. This is a constraint in the spectrum i.e., transfer function) of the closed loop dynamics. 
\item We use $\phi_\Lambda$ to
  further constrain the characteristic polynomial %\blue{[again, check the coefficients of the characteristic polynomial.]I don't find any error} 
$z^n+\sum_{i=1}^n(a_i-k_i)z^{n-i}=\prod_{i=1}^n (z-\lambda_i) : |\lambda_i|<1 \wedge \lambda_i \models \phi_{\Lambda}$. These constraints correspond to specific iterations for which the system may be unsafe.
\item Pass the refined abstraction $\phi(K)$ with the new constraints and the list of iterations $k$ to the {\sc synthesize} phase.
  %, as well as a counterexample plant coefficients ($P_a$) and repeat the loop again. 
\end{enumerate} 
\end{enumerate}

\subsubsection{Illustrative Example}
Let us consider the following example with discretized dynamics
$$A_d = \left[\begin{array}{ccc}2.6207&-1.1793&0.65705\\2&0&0\\0&0.5&0\end{array}\right],
\quad B_d = \left [\begin{array}{c}8\\0\\0\end{array}\right]$$
Using the initial state bounds $\underline{x_{0}}=-0.9$ and
$\overline{x_{0}}=0.9$, the input bounds $\underline{u}=-10$ and
$\overline{u}=10$, and safety specifications $\underline{x_{i}}=-0.92$ and
$\overline{x_{i}}=0.92$, the {\sc synthesize} phase in
Fig.~\ref{fig:CEGARIS} generates an initial candidate controller
$K=[\begin{array}{ccc}0.24609375&-0.125&0.1484375\end{array}]$.
This candidate is chosen for its closed-loop stable dynamics, but the {\sc verify} phase finds it to be
unsafe and returns a list of iterations with an initial state that
fails the safety specification {\footnotesize $(k,x_0) {\in}
\{ (2, [\begin{array}{ccc}0.9&-0.9&0.9\end{array}]),\allowbreak (3,
[\begin{array}{ccc}0.9&-0.9&-0.9\end{array}])\}$}.  This allows the {\sc
abstract} phase to create a new safety specification that considers these
iterations for these initial states to constrain the solution space.  This
refinement allows {\sc synthesize} to find a new controller {\footnotesize
$K=[\begin{array}{ccc}0.23828125&-0.17578125&0.109375\end{array}]$}, which
this time passes the verification phase, resulting in a safe system.

%% Note that the abstraction refinement is a constraint on the proposed model which can increase or decrease the parameter space at any time, whereas the concrete counterexample passed to the synthesizer is monotonically decreasing. The former improves the speed of the tool, whereas the second ensures convergence.

\begin{figure}
\centering
{\scriptsize
\begin{tikzpicture}[scale=0.3,->,>=stealth',shorten >=.2pt,auto, semithick, ampersand replacement=\&,]
  \matrix[nodes={draw, fill=none, shape=rectangle, minimum height=.2cm, minimum width=.2cm, align=center},row sep=.8cm, column sep=.2cm] {
   \coordinate (aux1);
   \& \coordinate (aux2);
   \& ;\\
   \&\node[fill=gray!20,align=center,xshift=-1.5cm] (pre) {{\sc 1. pre-}\\{\sc processing}};
   \& \node[fill=gray!20,align=center] (abstract) {\sc 4. abstract};
   \& \coordinate (aux); \\ 
   \&
   \& \node[fill=gray!20,align=center] (synth) {\sc 2. synthesize};
   \& \node[fill=gray!20,align=center, minimum width=4cm] (verify) {\sc 3. verify ($\phi$)};
      \node[draw=none] (SAT) at ([xshift=2.7cm,yshift=-.3cm]verify)  {\sc PASS};
   \& \node[ellipse, fill=gray!20] (done) {{\sc Done}};\\
   \&
   \& \node[draw,rectangle,align=center] (KSAT) {Program \\ Search};
   \& complexnode/.pic={
     \coordinate (AA);
     \node[draw,rectangle,align=center] (AAV) at ([xshift=-1.5cm]AA.center) {Abstract\\Acceleration};
     \node[draw,rectangle,align=center] (AAC) at ([xshift=1.5cm]AA.center) {Abstraction \\ Verifier};
    }\\
  };
  \path
    (pre.east) edge node[align=center] {$P_a$, $N$,\\ $\phi_\mathit{init}^k$} (abstract.west)
    (abstract.south) edge node{$(\phi(K), k)$} (synth.north)
    ([yshift=1em]synth.east) edge node {$K$} ([yshift=1em]verify.west)
    (aux) edge (abstract.east) 
    (verify.east) edge (done.west);
  \path
    ([xshift=+.6cm]KSAT.north) edge[left] node[xshift=-.3cm] {$K$} ([xshift=0.6cm]synth.south) 
    ([xshift=-.6cm]synth.south) edge node[align=center,xshift=.3cm] {$(\phi(K),k)$} ([xshift=-.6cm]KSAT.north)
    ([yshift=.5cm]AAV.east) edge node{\sl $\hat{X}^\#$} ([yshift=.5cm]AAC.west)
    ([yshift=-.5cm]AAC.west) edge node{$(k,x_0)$} ([yshift=-.5cm]AAV.east)
    (AAC.north) edge node{\sl $\hat{X}^\#$} ([xshift=4.9cm]verify.south)
    ([xshift=-5cm]verify.south) edge node{$K$} (AAV.north);
  \path[-] 
     (verify.north) edge node {$(k,x_0)$} (aux);
\end{tikzpicture}
}
\caption{Abstraction-based CEGIS}
\label{fig:CEGARIS} 
\end{figure}
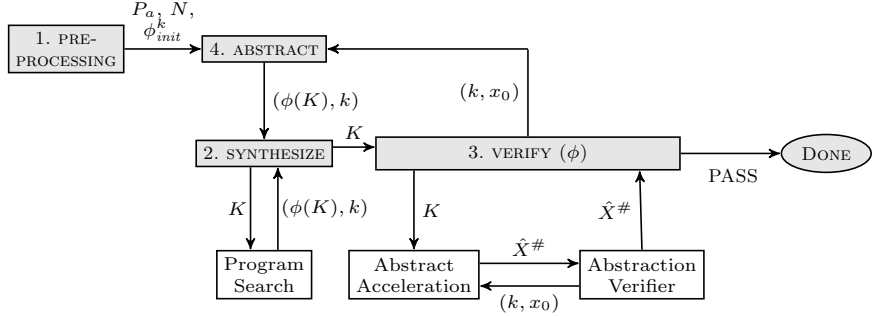

%-------------------------------
\section{Experimental Evaluation}
%-------------------------------

%-------------------------------
\subsection{Description of the benchmarks}
\label{experimental-setup}
%-------------------------------

A set of state-space models for different classes of systems has been taken 
from the literature~\cite{Franklin15, maglev, converters, CTMS} and employed
for validating our methodology.

\textit{DC Motor Rate} plants describes the angular velocity of a DC Motor, respectively. 
The \textit{Automotive Cruise System} plant represents the speed of a motor vehicle. 
The \textit{Helicopter Longitudinal Motion} plant provides the longitudinal motion model 
of a helicopter. 
The \textit{Inverted Pendulum} plant describes a pendulum model
with its center of mass above its pivot point. 
The \textit{Magnetic Suspension} plant provides a physical model for which 
a given object is suspended via a magnetic field. 
The \textit{Magnetized Pointer} plant describes a physical model employed in analogue gauges 
and indicators that is rotated through interaction with magnetic fields.
The \textit{1/4 Car Suspension} plant presents a physical model that connects a car to its wheels 
and allows relative motion between the two parts.
The \textit{Computer Tape Driver} plant describes a system to read and write data 
on a storage device.

%Section~\ref{sec:BenchmarksforStateSpace} in the appendix summarizes all our
Our benchmarks are SISO models (Section~\ref{sec:preliminaries}).  The
Inverted Pendulum appears to be a two-output system, but it is treated as
two SISO models during the experiments.  All the
state measurements are assumed to be available  
%In practice, state estimators should be designed in order to obtain an estimate of the states from the single output.  
(current work targets the extension of our framework to observer-based synthesis).   

All benchmarks are discretized with different sample times~\cite{fadali}. 
All experiments are performed considering $\underline{x_{i}}=-1$ and 
$\overline{x_{i}}=1$ and the reference inputs $r_{k}=0, \forall k>0$.
We conduct the experimental evaluation on a 12-core 2.40\,GHz Intel
Xeon E5-2440 with 96\,GB of RAM and Linux OS.  We use the Linux
\emph{times} command to measure CPU time used for each benchmark.  The
runtime is limited to one hour per benchmark.

%-------------------------------
\subsection{Objectives}
%-------------------------------

Using the state-space models given in Section~\ref{experimental-setup}, 
our evaluation has the following two experimental goals:
%aims to answer two research questions:
%

%\todo{Remove RQ2 and only discuss the sanity check in 5.3?}

\begin{enumerate}

\item[EG1] \textbf{(CEGIS)} Show that both the multi-staged and the abstraction-based
CEGIS approaches are able to generate FWL digital controllers in a reasonable amount of time.

\item[EG2] \textbf{(sanity check)} Confirm the
stability and safety of the synthesized controllers outside of our model.

\end{enumerate}

%-------------------------------
\subsection{Results}
%-------------------------------

We provide the results in Table~\ref{tab:results}.  Here \textit{Benchmark} is
the name of the respective benchmark, \textit{Order} is the number of
continuous variables, $\mathcal{F}_{\langle I_p,F_p \rangle}$ is the
fixed-point precision used to model the plant, while \textit{Time} is the
total time required to synthesize a controller for the given
plant with one of the two methods.  Timeouts are indicated by~\xmark.  The
precision for the controller, $\mathcal{F}_{\langle I_c,F_c \rangle}$, is
chosen to be $I_c = 8$, $F_c = 8$.

For the majority of our benchmarks, we observe that the abstraction-based
back-end is faster than the basic multi-staged verification approach, and
finds one solution more (9) than the multi-staged back-end (8).  In direct
comparison, the abstraction-based approach is on average able to find a
solution in approximately 70\% of the time required using the multi-staged
back-end, and has a median run-time 1.4\,s, which is seven times smaller
than the multi-staged approach.  The two back-ends complement each other in
benchmark coverage and together solve all benchmarks in the set.  On average
our engine spent 52\% in the synthesis and 48\% in the verification phase.

\begin{table}
\centering
%\scriptsize
\begin{tabular}{| r | l | c | c | r | c | r |}
\hline
\# & \multicolumn{1}{|c|}{Benchmark} & \multicolumn{1}{|c|}{Order} & \multicolumn{2}{|c|}{Multi-staged}                 & \multicolumn{2}{|c|}{Abstraction} \\
   &                                  & & \multicolumn{1}{|c|}{$\mathcal{F}_{\langle I_p,F_p \rangle}$} & \multicolumn{1}{|c|}{Time} & \multicolumn{1}{|c|}{$\mathcal{F}_{\langle I_p,F_p \rangle}$} & \multicolumn{1}{|c|}{Time} \\\hline
1  & Cruise Control  & 1 & 8,16   & 8.40\,s & 16,16  &   2.17\,s \\
2  & DC Motor          & 2 & 8,16   & 9.45\,s & 20,20  &   2.06\,s \\
3  & Helicopter        & 3 & \xmark & \xmark  & 16,16  &   1.37\,s \\
4  & Inverted Pendulum & 4 & 8,16   & 9.65\,s & 16,16  &   0.56\,s \\
5  & Magnetic Pointer  & 2 & \xmark & \xmark  & 28,28  &  44.14\,s \\
6  & Magnetic Suspension & 2 & 12,20  &10.41\,s & 16,16  &   0.61\,s \\
7  & Pendulum          & 2 & 8,16   &14.02\,s & 16,16  &   0.60\,s \\
8  & Suspension        & 2 & 12,20  &73.66\,s & \xmark &    \xmark \\
9  & Tape Driver       & 3 & 8,16   &10.10\,s & 16,16  &  68.24\,s \\
10 & Satellite          & 2 & 8,16   & 9.43\,s & 16,16  &   0.67\,s \\\hline
\end{tabular}
\vspace{0.05in}
\caption{Experimental results\label{tab:results}}
\end{table}

The median run-time for our benchmark set is 9.4\,s.  Overall, the average
synthesis time amounts to approximately 15.6\,s.  We consider these times
short enough to be of practical use to control engineers, and thus affirm
EG1.

There are a few instances for which the system fails to find a controller. 
For the na\"ive approach, the completeness threshold may be too large, thus
causing a timeout.  On the other hand, the abstraction-based approach may
require a very precise abstraction, resulting in too many refinements and,
consequently, in a timeout.  Yet another source of incompleteness is the
inability of the {\sc synthesize} phase to use a large enough precision for
the plant model.

The synthesized controllers are confirmed to be safe outside of our model
representation using MATLAB, achieving EG2.  A~link to the full
experimental environment, including scripts to reproduce the results, all
benchmarks and the tool, is provided in the
footnote as an Open Virtual Appliance
(OVA).\footnote{\url{www.cprover.org/DSSynth/controller-synthesis-cav-2017.tar.gz}}
The provided experimental environment runs multiple discretisations for each
benchmark, and lists the fastest as the result synthesis time.

%-------------------------------
\subsection{Threats to validity}
%-------------------------------

\textit{Benchmark selection:} We report an assessment of both our approaches
over a diverse set of real-world benchmarks.  Nevertheless, this set of
benchmarks is limited within the scope of this paper and the performance may
not generalize to other benchmarks.\\
\textit{Plant precision and discretization heuristics:} Our algorithm to
select suitable FWL word widths to model the plant behavior %% employs a
%% heuristic based on user-provided controller word-width specifications.
increases the precision by 8 bits at each step 
in order to be compliant with the CBMC type API. %% , we increase the precision
%% by 8 bits at each step.
Similarly, for discretization, we run multiple discretizations for each
benchmark and retain the fastest run.
%% where we may need to increase the precision of the time
%% discretization,
%% we select the time steps for the discretizations on a
%% similar basis.
This works sufficiently well for our benchmarks, but
performance may suffer in some cases, for example if the completeness
threshold is high.\\
\textit{Abstraction on other properties:} The performance gain from abstract
acceleration may not hold for more complex properties than safety, for
instance ``eventually reach and always remain in a given safe set''.

%-------------------------------
\section{Conclusion}
%-------------------------------

We have presented two automated approaches to synthesize digital
state-feedback controllers that ensure both stability and safety over the
state-space representation.  The first approach relies on unfolding of the
closed-loop model dynamics up to a completeness threshold, while the second
one applies abstraction refinement and acceleration to increase speed whilst
retaining soundness.
Both approaches are novel within the control literature: they give a fully
automated synthesis method that is algorithmically and numerically sound,
considering various error sources in the implementation of the digital
control algorithm and in the computational modeling of plant dynamics.
Our experimental results show that both approaches are able to synthesize
safe controllers for most benchmarks within a reasonable amount of time
fully automatically.  In particular, both approaches complement each other
and together solve all benchmarks, which have been derived from the control
literature.

Future work will focus the extension of these approaches to the
continuous-time case, to models with output-based control architectures
(with the use of observers), and to the consideration of more complex specifications. 
%and of additional performance requirements in the synthesis of digital controllers.

%\newpage

%\bibliographystyle{abbrv}
%\bibliography{main}  

%\begin{thebibliography}{4}
%\end{thebibliography}

\appendix

%-------------------------------
\section{Stability of Closed-loop Models}
\label{sec:appendix-stability}
\subsection{Stability of closed-loop models with fixed-point controller error}
%-------------------------------
\label{sec:stab_FWL}
The proof of Jury's criterion~\cite{fadali} relies on the fact that the relationship between states and
next states is defined by $x_{k+1} = (A_d - B_dK) x_k$, all computed
at infinite precision.  When we employ a FWL digital controller, the
operation becomes:
\begin{align*}
x_{k+1} &= A_d \cdot x_{k} -(\mathcal{F}_{\langle I_c,F_c \rangle}(K)\cdot\mathcal{F}_{\langle I_c,F_c \rangle}(x_{k})).  \\
x_{k+1} &= (A_d  - B_dK) \cdot x_k + B_dK\delta, 
%\\ x_{k+1}&=A_d x_{k} + B_du_{k}
\end{align*}
where $\delta$ is the maximum error that can be introduced by the FWL
controller in one step, i.e., by reading the states values once and
multiplying by $K$ once.  We derive the closed form expression for $x_n$ as
follows:
\begin{align*}
x_{1} &= (A_d  - B_dK)x_0 + B_dK\delta \\
x_{2} %&= (A_d  - B_dK)((A_d  - B_dK)x_0 + B_dK\delta ) + B_dK\delta \\
 &=(A_d  - B_dK)^2x_0 + (A_d  - B_dK)B_dK\delta + B_dK\delta \\
%x_{3}  &=(A_d  - B_dK)^3x_0 + (A_d  - B_dK)^2B_dK\delta + (A_d  - B_dK)B_dK\delta + B_dK\delta\\
x_{n} &= (A_d  - B_dK)^nx_0 + (A_d  - B_dK)^{n-1}B_dK\delta + ... + (A_d  - B_dK)^1B_dK \delta + B_dK\delta \\
  &= (A_d - B_dK)^nx_0 + \sum_{i=0}^{i=n-1}(A_d - B_dK)^iB_dk\delta. 
\end{align*}

The definition of asymptotic stability is that the system converges to a
reference signal, in this case we use no reference signal so an
asymptotically stable system will converge to the origin.  We know that the
original system with an infinite-precision controller is stable, because we
have synthesized it to meet Jury's criterion.  Hence, $(A_d - B_dK)^n x_0$ 
must converge to zero.

%T%o show that a fixed-point implementation of the same controller will
%converge to a finite value, we need to show that the infinite sum converges. 
The power series of matrices
converges~\cite{horn1990matrix} iff the eigenvalues of the matrix are less
than~1 as follows:
%
%\begin{align*}
$\sum_{i=0}^{\infty}T^i  = (I - T)^{-1}$, 
%\end{align*}
%
where $I$ is the identity matrix and $T$ is a square matrix. Thus, our system will converge to the value 
\begin{align*}
0 + (I - A_d + B_dK)^{-1}B_dk\delta \,. 
\end{align*}
As a result, if the value $(I - A_d + B_dK)^{-1}B_dk\delta$ is within the
safe space, then the synthesized fixed-point controller results in a safe
closed-loop model.  The convergence to a finite value, however, will not
make it asymptomatically stable.

%+++++++++++++++++++++++++++++++++++++++++++++++++++++++++++++++++++++++++++++++
\section{Errors in LTI models} \label{sec:appendix:LTIbackground}

\subsection{Errors due to numerical representation} \label{appendix:numerical_errors}

We have  used $\mathcal{F}_{\langle I,F \rangle}(x)$ denote a real number
$x$ represented in a fixed point domain, with $I$ bits representing the
integer part and $F$ bits representing the decimal part.  The smallest
number that can be represented in this domain is $c_m=2^{-F}$.  The
following approximation errors will arise in mathematical operations and
representation:
\begin{enumerate}

\item {\bf Truncation:} Let $x$ be a real number, and $\mathcal{F}_{\langle
I,F \rangle}(x)$ be the same number represented in a fixed-point domain as
above.  Then $\mathcal{F}_{\langle I,F \rangle}(x) = x-\delta_T$ where the
error $ \delta_T=x\ \%_{c_m}\ \tilde x$, and $\%_{c_m}$ is the modulus
operation performed on the last bit. 
Thus, $\delta_T$ is the truncation error and it will propagate across
operations.
\item {\bf Rounding:} The following errors appear in basic operations.  Let
$c_1, c_2$ and $c_3$ be real numbers, and $\delta_{T1}$ and $\delta_{T2}$ be
the truncation errors caused by representing $c_1$ and $c_2$ in the
fixed-point domain as above.
\begin{enumerate}
\item Addition/Subtraction: these operations only propagate errors coming
from truncation of the operands, namely $\mathcal{F}_{\langle I,F
\rangle}(c_1) \pm \mathcal{F}_{\langle I,F \rangle}(c_2) = c_3 + \delta_3$
with $|\delta_3| \leq |\delta_{T1}| + |\delta_{T2}|$.
\item Multiplication: $\mathcal{F}_{\langle I,F \rangle}(c_1) \cdot
\mathcal{F}_{\langle I,F \rangle}(c_2) =  c_3 + \delta_3$ with $|\delta_3|
\leq |\delta_{T1}\cdot\mathcal{F}_{\langle I,F \rangle}(c_2)|\allowbreak +
|\delta_{T2}\cdot\mathcal{F}_{\langle I,F \rangle}(c_1)| + c_m$, where
$c_m=2^{-F}$ as above.
\item Division: the operations performed by our controllers in the FWL
domain do not include division.  However, we do use division in computations
at the precision of the plant.  Here the error depends on whether the
divisor is greater or smaller than the dividend:  $\mathcal{F}_{\langle I,F
\rangle}(c_1) / \mathcal{F}_{\langle I,F \rangle}(c_2) = c_3 + \delta_{T3}$
where $\delta_{T3}$ is $(\delta_{T2}\cdot c_1 - \delta_{T1}\cdot
c_2)/(\delta_{T2}^2 - \delta_{T2} c_2)$,
%
%$\mathcal{F}_{\langle I,F \rangle}(c_1) / \mathcal{F}_{\langle I,F \rangle}(c_2) =  (c_1 - \delta_{T1})/ (c_2 + \delta_{T2}) + c_m $, where $c_m=2^{-F}$ as above.
\end{enumerate}

\item {\bf Overflow:}
The maximum size of a real number $x$ that can be represented in a fixed
point domain as $\mathcal{F}_{\langle I,F \rangle}(x)$ is $\pm
(2^{I-1}+1-2^{-F})$.  Numbers outside this range cannot be represented by
the domain.  We check that overflow does not occur.

\end{enumerate}

\subsection{Modeling quantization as noise} \label{appendix:quantization-noise}
%+++++++++++++++++++++++++++++++++++++++++++++++++++++++++++++++++++++++++++++++

During any given ADC conversion, the continuous signal will be sampled in
the real domain and transformed by $\mathcal{F}\langle I_{c},F_{c} \rangle
(x)$ (assuming the ADC discretization is the same as the digital
implementation).  This sampling uses a threshold which is defined by the
less significant bit ($q_{c}=c_{m_c}=2^{-F_c}$) 
%\blue{[this quantity was named $c_m$ above and in other parts of the paper - please uniformise notations]}
of the ADC and some non-linearities of the circuitry.  The overall conversion is
$$\mathcal{F}\langle I_{c},F_{c} \rangle(y(t)) = y_k :
y_k \in \left[y(t)-\frac{q_{c}}{2}\ \ \ \ y(t)+\frac{q_{c}}{2}\right] \,.$$
If we denote the error in the conversion by $\nu_k=y_k-y(t)$ where $t = nk$,
and $n$ is the sampling time and $k$ the number of steps, then we may define
some bounds for it $\nu_k \in [-\frac{q_{c}}{2}\ \ \frac{q_{c}}{2}]$.

We will assume, for the purposes of this analysis, that the domain of the
ADC is that of the digital controller (i.e, the quantizer includes any
digital gain added in the code).  The process of quantization in the DAC is
similar except that it is calculating $\mathcal{F}\langle I_{dac},F_{dac}
\rangle (\mathcal{F}\langle I_{c},F_{c} \rangle (x)) $.  If these domains
are the same ($I_{c}=I_{dac},\allowbreak F_{c}=F_{dac}$), or if the DAC
resolution in higher than the ADCs, then the DAC quantization error is equal
to zero.  From the above equations we can now define the ADC and DAC
quantization noises ${\nu_1}_k \in [-\frac{q_1}{2}\ \ \frac{q_1}{2}]$ and
${\nu_2}_k \in [-\frac{q_2}{2}\ \ \frac{q_2}{2}]$, where $q_1=q_{c}$ and 
$q_2=q_\mathit{dac}$.  This is illustrated in
Fig.~\ref{ssec:statefeedbackcontrol} where $Q_1$ is the quantizer of the ADC
and $Q_2$ the quantizer for the DAC.  These bounds hold irrespective of
whether the noise is correlated, hence we may use them to over-approximate
the noise effect on the state space progression over time.  The
resulting dynamics are
\begin{align*}
%\label{eq:pre_quantization}
{x}_{k+1} = {A}_d{x}_k+{B}_d({u}_k+{{\nu}_2}_k), \quad u_k = -K{x}_{k}+{{\nu}_1}_k, 
\end{align*}
which result in the following closed-loop dynamics:
\begin{align*}
%\label{eq:quantization}
{x}_{k+1} &= ({A}_d-{B}_d{K}_d) {x}_k+{B}_d{{\nu}_2}_k +{{\nu}_1}_k \,. 
\end{align*}

\end{document}